\newcommand{\comment}[1]{}
\newcommand{\rev}[1]{{\color{blue}#1}} %revise of the text
\newcommand{\com}[1]{\textbf{\color{red}(COMMENT: #1)}} %comment of the text
\newcommand{\clar}[1]{\textbf{\color{green}(NEED CLARIFICATION: #1)}}
\newcommand{\rev}[1]{#1}
\newcommand{\com}[1]{}
\newcommand{\clar}[1]{}
\begin{document}
\title{Performance and Convergence of Multi-user \\ \rev{Online} Learning}
\author{Cem Tekin \and Mingyan Liu}
\institute{Department of Electrical Engineering and Computer Science\\
University of Michigan, Ann Arbor, Michigan, 48109-2122\\ \email{\{cmtkn, mingyan\}@umich.edu}}
\maketitle

\begin{abstract}  
We study the problem of allocating multiple users to a set of wireless channels in a decentralized manner 
when the channel qualities are time-varying and unknown to the users, and accessing the same channel by multiple users
leads to reduced quality due to interference.  In such a setting the users not only need to learn
the inherent channel quality and at the same time the best allocations of users to channels so as to maximize the social welfare. 
Assuming that the users adopt a certain online learning algorithm, we investigate under what conditions the 
socially optimal allocation is achievable.  In particular we examine the effect of different levels of knowledge the users 
may have and the amount of communications and cooperation.  
The general conclusion is that when the cooperation of users decreases and the
uncertainty about channel payoffs increases it becomes harder to
achieve the socially optimal allocation. 
\end{abstract}
\keywords{multi-user learning, multi-armed bandits, spectrum sharing, congestion games} 

\section{Introduction} \label{sec:intro}

In this paper we study the dynamic spectrum access and spectrum
sharing problem in a learning context.  Specifically, we consider
a set of $N$ common channels shared by a set of $M$ users.  A
channel has time varying rate $r(t)$, and its statistics are not
completely known by the users.  Thus each user needs to employ
some type of learning to figure out which channels are of better
quality, e.g., in terms of their average achievable rates. At the
same time, simultaneous use of the same channel by multiple users
will result in reduced rate due to interference or collision.
The precise form of this performance degradation may or may not be known to the user.  Thus the users also need to use learning to avoid excess interference or congestion.  Furthermore, each user may have private information that is not shared, e.g., users may perceive channel quality differently due to difference in location as well as individual modulation/coding schemes.

Without a central agent, and in the presence of information
decentralization described above, we are interested in the
following questions: (1) for a given common learning algorithm,
does the multiuser learning process converge, and
(2) if it does, what is the quality of the equilibrium point with
respect to a globally optimal spectrum allocation scheme, one that
could be computed for a global objective function with full
knowledge of channel statistics as well as \rev{the users'} private
information.

A few recent studies have addressed these questions in some
special cases. For instance, in \cite{anandkumar} it was shown
that learning using a sample-mean based index policy leads to a
socially optimal (sum of individual utilities) allocation when
channels evolve as iid processes and colliding players get zero
reward provided that this optimal allocation is such that each user
occupies one of the $M$ best channels (in terms of average rates).
This precludes the possibility that not all users may have the
same set of $M$ best channels, and that in some cases the best
option is for multiple users to share a common channel, e.g., when
$N<M$.

In this study we investigate under what conditions the socially
optimal allocation is achievable by considering different levels of
communication (or cooperation) allowed among users, and different
levels of uncertainty on the channel statistics. The general
conclusion, as intuition would suggest, is that when the
cooperation of users increases and the channel uncertainty
decreases it becomes easier to achieve the socially optimal
welfare. Specifically, we assume that the rate (or reward) user $i$ gets
from channel $j$ at time $t$ is of the form $r_j(t) g_j(n_j(t))$
where $r_j(t)$ is the rate of channel $j$ at time $t$, $n_j(t)$ is
the number of users using channel $j$ at time $t$, and $g_j$ is the
user independent interference function (IF) for channel
$j$. This model is richer than the previously used models
\cite{anandkumar,kleinberg1,liu2} since $r_j(t)$ can represent
environmental effects such as fading or primary user activity,
while $g_j$ captures interactions between users. We consider the
following three cases. 

In the first case (C1), each channel evolves as an iid random
process in time, the users do not know the channel statistics,
nor the form of the interference, nor the total number of users
present in the system, and no direct communication is allowed
among users. A user can measure the overall rate it gets from
using a channel but cannot tell how much of it is due to the
dynamically changing channel quality (i.e., what it would get if
it were the only user) vs. interference from other users.  In this
case, we show that if all users follow the Exp3 algorithm
\cite{auer2} then the channel allocation converges to a set of
pure Nash equilibria (PNE) of a congestion game defined by the
IFs and mean channel rates.  In this case a socially optimal
allocation cannot be ensured, as the set of PNE are of different
quality, and in some cases the socially optimal allocation may not
be a PNE.

In the second case (C2), each channel again evolves as an iid
random process in time, whose statistics are unknown to the user.
However, the users now know the total number of users in the
system, as well as the fact that the quantitative impact of
interference is common to all users (i.e., user independent),
though the actual form of the interference function is unknown.
In other words the rate of channel $j$ at time $t$ is perceived by
user $i$ as $h_j(t,n_j(t))$ so user $i$ cannot distinguish between
components $r_j(t)$ and $g_j(n_j(t))$. 
Furthermore, users are now allowed minimal amount of communication
when they happen to be in the same channel, specifically to find
out the total number of simultaneous users of that channel. In this case we present a
sample-mean based randomized learning policy that achieves
socially optimal allocation as time goes to infinity, with a
sub-linear regret over the time horizon with respect to the
socially optimal allocation.

In the third case (C3), as in case (C2) the users know the total
number of users in the system, as well as the fact that the
IF is user independent and decreasing without knowing the
actual form of the IF.  However, the channels
are assumed to have constant, albeit unknown, rates.  We show that
even without any communication among users, there is a randomized
learning algorithm that achieves the socially optimal allocation
in finite time.

It's worth pointing out that in the settings outlined above, the users are 
{\em non-strategic}, i.e., each user simply follow a pre-set learning rule 
rather than playing a game.  In this context it is reasonable to introduce minimal amount of 
communication among users and assume they may cooperate. 
It is possible that even in this case the users may not know their IF 
but only the total rate they get for lack of better detecting capabilities (e.g., they may only 
be able to detect the total received SNR as a result of channel rate and user interference). 

Online learning by a single user was studied by \cite{agrawal1,anantharam2,auer,lai1}, in which sample-mean based index policies were shown to achieve 
logarithmic regret with respect to the best single-action policy without a priori 
knowledge of the statistics, and are order-optimal, when the rewards are given by 
an iid process. In \cite{anantharam1,tekin,tekin2} Markovian rewards are considered, 
with \cite{tekin2} focusing on {\em restless} reward processes, where a process continues
to evolve according to a Markov chain regardless of the users' actions.  In all these 
studies learning algorithms were developed to achieve logarithmic regret. 
Multi-user learning with iid reward processes have been studied in a dynamic spectrum context
by \cite{anandkumar,krishnamachari,liu2}, with a combinatorial structure adopted in \cite{krishnamachari}, 
and with collision and random access models in \cite{anandkumar,liu2}. In \cite{gaurav2010}, convergence of multi-user learning with Exp3 algorithm to pure Nash equilibrium is investigated under the collision and fair sharing models. In the collision model, when there is more
than one user on a channel all get zero reward, whereas in the
random access model one of them, selected randomly, gets all the reward
while others get zero reward. In the fair sharing model, a user's utility is inversely proportional to the number of users who are on the same channel with the user. Note that these models do not
capture more sophisticated communication schemes where the rate a user gets is 
a function of the received SNR of the form 
$g_j(n)= f_j(\frac{P_t}{N_0+(n-1) P_t})=$ where $P_t$ is the nominal transmit power of all users
and $N_0$ the noise.  
Moreover, in the above studies the socially optimal allocation is a rather simple one: 
it is the orthogonal allocation of users to the first $M$ channels with the highest mean rewards. 
By contrast, we model a more general interference relationship among users, in which an allocation with users sharing the same channel may be the socially optimal one.
%between users.
The socially optimal allocations is not trivial in this case and
additional mechanisms may be needed for the learning algorithms to converge.

All of the above mentioned work assumes some level of
communication between the users either at the beginning or during
the learning. If we assume no communication between the users, achieving the socially optimal allocation seems very challenging
in general. Then one may ask if it is possible to achieve some
kind of equilibrium allocation. Kleinberg et. al.
\cite{kleinberg1} showed that it is possible for the case when the
channel rates are constant and the users do not know the IFs. They
show that when the users use {\em aggregate monotonic selection
dynamics}, a variant of Hedge algorithm \cite{freund1}, the
allocation converges to {\em weakly stable equilibria} which is a
subset of Nash equilibria (NE) of the congestion game defined by
the IFs. They show that for almost all congestion games weakly
stable equilibria is the same as PNE.

Other than the work described above \cite{ahmad2} considers {\em
spatial congestion games}, a generalization of congestion games
and gives conditions under which there exists a PNE and
best-response play converges to PNE. A mechanism design approach
for socially optimal power allocation when users are strategic is
considered in \cite{kakhbod1}.

The organization of the remainder of this paper is as follows. In
Sect. \ref{sec:problem} we present the notations and definitions
that will be used throughout the paper. In Sects.
\ref{sec:case1}, \ref{sec:case2}, \ref{sec:case3} we analyze the
cases stated in (C1), (C2), (C3) and derive the results
respectively. Conclusion and future research is given in Sect. \ref{sec:conc}.

\section{Preliminaries} \label{sec:problem}

Denote the set of users by $\mathcal{M}=\{1,2,\ldots,M\}$, and the set of
channels $\mathcal{N}=\{1,2,\ldots,N\}$. Time is slotted and indexed by 
$t=1,2,\ldots$ and a user can select a single channel at each time
step $t$. Without loss of generality let $r_j(t) \in [0,1]$ be the
rate of channel $j$ at time $t$ such that $\{r_j(t)\}_{t=1,2,\ldots}$ is generated by an iid
process with support $[0,1]$ and mean $\mu_j \in [0,1]$. Let $g_j: \mathbb{N} \rightarrow [0,1]$
be the interference function (IF) on channel $j$\comment{where $g_j(n)$
is a positive decreasing function for $n \geq 1$ with $g_j(0)=0$}
where $g_j(n)$ represents the interference when there are $n$
users on channel $j$. We express the rate of channel $j$ seen by a user as $h_j(t)=r_j(t)g_j(n_j(t))$ when a user does not know the total number of users $n_j(t)$ using channel $j$ at time $t$ as in cases (C1) and (C3). When a user knows $n_j(t)$, we express the rate of channel $j$ at time $t$ as $h_{j, n_j(t)}(t) = r_j(t)g_j(n_j(t))$ as in case (C2). Let $\mathcal{S}_i={\cal N}$
be the set of feasible actions of user $i$ and $\sigma_i \in
\mathcal{S}_i$ be the action, i.e., channel selected by user $i$.
Let $\mathcal{S}= \mathcal{S}_1 \times \mathcal{S}_2
\times \ldots \times \mathcal{S}_N = {\cal N}^M$
be the set of feasible action profiles and $\sigma=\{\sigma_1,
\sigma_2, \ldots, \sigma_M \} \in \mathcal{S} $ be the action
profile of the users. Throughout the discussion we assume that the action of player $i$ at time $t$, i.e., $\sigma_i^{\pi_i} (t)$ is determined by the policy $\pi_i$. When $\pi_i$ is deterministic,
$\pi_i(t)$ is in general a function from all past
observations and decisions of user $i$ to the set of actions
$\mathcal{S}_i$. When $\pi_i$ is randomized, $\pi_i(t)$ generates
a probability distribution over the set of actions $\mathcal{S}_i$
according to all past observations and decisions of user $i$ from
which the action at time $t$ is sampled. Since the dependence of actions to the policy is trivial we use $\sigma_i(t)$ to denote the action of user $i$ at time $t$, dropping the superscript $\pi_i$.

Let $K_j(\sigma)$ be the set
of users on channel $j$ when the action profile is $\sigma$. Let $\mathcal{A}^* =
\arg\max_{\sigma \in \mathcal{S}} \sum_{i=1}^M \mu_{\sigma_i}
g_{\sigma_i}(K_{\sigma_i}(\sigma)) = \arg\max_{\sigma \in
\mathcal{S}} \sum_{j=1}^N \mu_{j} K_{j}(\sigma)
g_{j}(K_{j}(\sigma))$ be the set of socially optimal allocations
and denote by $\sigma^*$ any action profile that is in the set
$\mathcal{A}^*$. Let $v^*$ denote the socially optimal welfare,
i.e., $v^*= \sum_{i=1}^M \mu_{\sigma^*_i}
g_{\sigma^*_i}(K_{\sigma^*_i}(\sigma^*))$ and $v^*_j$ denote the payoff a user gets from channel $j$ under the socially optimal allocation, i.e., $v^*_j = \mu_j g_{j}(K_{j}(\sigma^*))$ if $K_{j}(\sigma^*) \neq 0$. %and $v^*_j=1$ otherwise.
Note that any permutation
of actions in $\sigma^*$ is also a socially optimal allocation
since IFs are user-independent.

For any policy $\pi$, the regret at time $n$ is
\begin{eqnarray}
R(n) = n v^* -
E\left[\sum_{t=1}^n \sum_{i=1}^{M} r_{\sigma_i(t)}(t)
g_{\sigma_i(t)}(K_{\sigma_i(t)}(\sigma(t)))\right], \nonumber
\end{eqnarray}
where expectation is taken with respect to the random nature of
the rates and the randomization of the policy. Note that for a
deterministic policy expectation is only taken with respect to the
random nature of the rates. For any randomized policy $\pi_i$, let
$p_i(t)=(p_{i1}(t), p_{i2}(t), \ldots, p_{iN}(t))$ be the mixed
strategy of user $i$ at time $t$, i.e., a probability distribution
on $\{1,2,\ldots,N\}$. For a profile of policies $\pi=[\pi_1,
\pi_2, \ldots, \pi_M]$ for the users let $p(t)=(p_1(t)^T,
p_2(t)^T, \ldots p_M(t)^T)^T$ be the profile of mixed strategies at
time $t$, where $p_i(t)^T$ is the transpose of $p_i(t)$. Then
$\sigma_i(t)$ is the action sampled from the probability
distribution $p_i(t)$. The dependence of $p$ to $\pi$ is trivial
and not shown in the notation.

\section{Allocations Achievable with Exp3 Algorithm (Case 1)} \label{sec:case1}

We start by defining a congestion game. A congestion game \cite{monderer1,rosenthal1} is given by the
tuple $({\cal M}, {\cal N}, (\Sigma_i)_{i\in{\cal M}},
(h_j)_{j\in {\cal N}})$, where ${\cal M}$
denotes a set of players (users), ${\cal N}$ a set of resources (channels), $\Sigma_i\subset 2^{\cal N}$ the strategy space of player $i$, and $h_j: \mathbb{N}\rightarrow\mathbb{R}$ a payoff
function associated with resource $j$, which is a function of the number of players using that resource. It is well known that a congestion game has a potential function and the local maxima of the potential function corresponds to PNE, and every sequence of asynchronous improvement steps is finite and converges to PNE.

In this section we relate the strategy update rule of Exp3
\cite{auer2} under assumptions (C1) to a congestion game.
Exp3 as given in Fig. \ref{fig:exp3} is a randomized algorithm
consisting of an exploration parameter $\gamma$ and weights
$w_{ij}$ that depend exponentially on the past observations where
$i$ denotes the user and $j$ denotes the channel. Each user runs
Exp3 independently but we explicitly note the user dependence
because a user's action affects other users' updates.

\begin{figure}[htb]
\fbox {
\begin{minipage}{\columnwidth}
{Exp3 (for user $i$)}
\begin{algorithmic}[1]
\STATE {Initialize: $\gamma \in (0,1)$, $w_{ij}(t)=1, \forall j \in {\cal N}$, $t=1$}
\WHILE{$t>0$}
\STATE{\begin{eqnarray*}
\hspace{-2.4in} p_{ij}(t)=(1-\gamma)\frac{w_{ij}(t)}{\sum_{l=1}^N w_{il}(t)}+\frac{\gamma}{N}
\end{eqnarray*}
}
\STATE{Sample $\sigma_i(t)$ from the distribution on $p_i(t)=[p_{i1}(t), p_{i2}(t), \ldots, p_{iN}(t)]$}
\STATE{Play channel $\sigma_i(t)$ and receive reward $h_{\sigma_i(t)}(t)$}
\FOR{$j=1,2,\ldots,N$}
\IF{$j=\sigma_i(t)$}
\STATE{Set $w_{ij}(t+1)=w_{ij}(t) \exp\left({\frac{\gamma h_{\sigma_i(t)}(t)}{p_{ij}(t) N}}\right)$}
\ELSE
\STATE{Set $w_{ij}(t+1)=w_{ij}(t)$}
\ENDIF
\ENDFOR
\STATE{$t=t+1$}
\ENDWHILE
\end{algorithmic}
\end{minipage} }
\caption{pseudocode of Exp3} \label{fig:exp3}
\end{figure}

At any time step before the channel rate and user actions are
drawn from the corresponding distributions, let $R_j$ denote the
random variable corresponding to the reward of the $j$th channel.
Let $G_{ij}=g_j(1+K'_j(i))$ be the random variable representing the
payoff user $i$ gets from channel $j$ where $K'_j(i)$ is the random
variable representing the number of users on channel $j$ other
than user $i$. Let $U_{ij}=R_j G_{ij}$ and $\bar{u}_{ij} = E_j [
E_{-i}[U_{ij}]]$ be the expected payoff to user $i$ by using
channel $j$ where $E_{-i}$ represents the expectation taken with
respect to the randomization of players other than $i$, $E_j$
represents the expectation taken with respect to the randomization
of the rate of channel $j$. Since the channel rate is independent
of users' actions $\bar{u}_{ij} = \mu_j \bar{g}_{ij}$ where
$\bar{g}_{ij} = E_{-i}[G_{ij}]$.

\begin{lemma} \label{lemma:exp1}
Under (C1) when all players use Exp3, the derivative of the continuous-time limit of Exp3 is the replicator equation given by
\begin{eqnarray*}
\xi_{ij} = \frac{1}{N} (\mu_j p_{ij}) \sum_{l=1}^N p_{il} (\bar{g}_{ij}-\bar{g}_{il}) ~. 
\end{eqnarray*}
\end{lemma}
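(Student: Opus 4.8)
The plan is to derive the continuous-time limit of the Exp3 weight-update rule and show that the induced dynamics on the mixed strategy $p_i$ is exactly the replicator equation stated. First I would compute $\dot{p}_{ij}$ (or rather the discrete increment $p_{ij}(t+1)-p_{ij}(t)$) from the definition $p_{ij}(t)=(1-\gamma)w_{ij}(t)/\sum_l w_{il}(t)+\gamma/N$. Writing $q_{ij}(t)=w_{ij}(t)/\sum_l w_{il}(t)$ for the normalized weights, the multiplicative update $w_{ij}(t+1)=w_{ij}(t)\exp(\gamma \hat{h}_{ij}(t)/N)$, where $\hat{h}_{ij}(t)$ is the importance-weighted estimator of the reward on channel $j$ (nonzero only on the played channel), gives after normalization a standard exponential-weights recursion for $q_{ij}$. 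For small $\gamma$, expanding the exponential to first order yields $q_{ij}(t+1)-q_{ij}(t)\approx \frac{\gamma}{N} q_{ij}(t)\big(\hat{h}_{ij}(t)-\sum_l q_{il}(t)\hat{h}_{il}(t)\big)$, which is the classical EXP3-to-replicator passage.

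Next I would take expectations. Conditioned on the mixed-strategy profile $p(t)$, the importance-weighted estimator is unbiased for the expected payoff: $E[\hat{h}_{ij}(t)\mid p(t)] = \bar{u}_{ij} = \mu_j \bar{g}_{ij}$, using the identity $\bar{u}_{ij}=\mu_j\bar{g}_{ij}$ established just before the lemma (channel rate independent of users' actions) and the fact that dividing by the sampling probability $p_{ij}(t)$ exactly cancels the probability of observing channel $j$. Thus the expected drift of $q_{ij}$ is $\frac{\gamma}{N} q_{ij}\big(\mu_j\bar{g}_{ij}-\sum_l q_{il}\mu_l\bar{g}_{il}\big)$. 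In the continuous-time limit (rescaling time by $\gamma$ and letting $\gamma\to 0$, with the $\gamma/N$ exploration term in $p_{ij}$ vanishing so that $p_{ij}\to q_{ij}$), this is the ODE $\xi_{ij}=\frac{1}{N}p_{ij}\big(\mu_j\bar{g}_{ij}-\sum_l p_{il}\mu_l\bar{g}_{il}\big)$. A small rearrangement — noting $\sum_l p_{il}=1$ so that $\mu_j\bar{g}_{ij}=\sum_l p_{il}\mu_j\bar{g}_{ij}$, and wanting to match the stated asymmetric form — I would massage the bracket into $\sum_l p_{il}(\mu_j\bar{g}_{ij}-\mu_l\bar{g}_{il})$; the stated form $\frac{1}{N}(\mu_j p_{ij})\sum_l p_{il}(\bar{g}_{ij}-\bar{g}_{il})$ suggests the intended reading is with the $\mu_j$ factored out (i.e. the payoff on channel $l$ is also weighted by $\mu_j$ in their normalization, or there is a mild abuse treating the relative-fitness comparison coordinatewise), so I would reconcile the algebra with their notation at this point.

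The main obstacle is making the ``continuous-time limit'' rigorous rather than heuristic: one must justify interchanging the limit $\gamma\to 0$ with expectation, control the $O(\gamma^2)$ remainder from the exponential expansion uniformly in $t$, and invoke a stochastic-approximation argument (e.g. Kushner–Clark / Benaïm-type ODE method) to claim that the rescaled interpolated trajectory of $p(t)$ converges almost surely to a solution of the ODE. A secondary subtlety is that $\bar{g}_{ij}$ depends on the \emph{other} players' mixed strategies $p_{-i}(t)$, so the "replicator equation" is really a coupled system across users and the vector field is nonautonomous in appearance but autonomous as a function of the full profile $p$; I would state it as such. Given the level of rigor in the surrounding text, I expect the intended proof is the formal first-order expansion plus the unbiasedness identity, citing the standard EXP3/replicator correspondence (e.g. \cite{kleinberg1,freund1}) for the limiting step, and I would present it at that level while flagging the stochastic-approximation machinery as the technical underpinning.
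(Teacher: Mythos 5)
Your proposal is correct and follows essentially the same route as the paper: a first-order expansion of the Exp3 update in $\gamma$ (the paper formalizes it as $\lim_{\gamma\to 0} dp_{ij}/d\gamma$, computed separately for the cases $\sigma_i(t)=j$ and $\sigma_i(t)\neq j$), followed by expectation over user $i$'s own randomization and then over the channel rates and the other users' actions, with the importance weighting providing exactly the cancellation you invoke via unbiasedness of $\hat{h}_{ij}$. The discrepancy you flag at the end is real: the drift one actually obtains is $\frac{1}{N}p_{ij}\sum_{l} p_{il}\left(\mu_j\bar{g}_{ij}-\mu_l\bar{g}_{il}\right)$, and the paper's last line (and the lemma statement) factor $\mu_j$ out of both terms without justification, so your reconciliation step is pointing at a typo in the paper rather than a gap in your argument.
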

\begin{proof}
Note that
\begin{eqnarray}
(1-\gamma)w_{ij}(t)= \sum_{l=1}^N w_{il}(t) \left(p_{ij}(t)-\frac{\gamma}{N}\right) ~. \label{eqn:exp1}
\end{eqnarray}
We consider the effect of \rev{user $i$'s} action $\sigma_i(t)$ on his probability update on channel $j$.  \rev{We have two cases:} $\sigma_i(t)=j$ \rev{and} $\sigma_i(t) \neq j$. Let $A_{i,j}^{\gamma,t} = \exp\left( \frac{\gamma U_{ij}(t)}{p_{ij}(t)N} \right)$.

Consider the case $\sigma_i(t)=j$.
\begin{eqnarray}
p_{ij}(t+1)= \frac{(1-\gamma) w_{ij}(t) A_{i,j}^{\gamma,t}} {\sum_{l=1}^N w_{il}(t) + w_{ij}(t) \left( A_{i,j}^{\gamma,t}-1 \right)} + \frac{\gamma}{N} ~. \label{eqn:exp2}
\end{eqnarray}
Substituting (\ref{eqn:exp1}) into (\ref{eqn:exp2})
\begin{eqnarray}
p_{ij}(t+1)&=& \frac{ \sum_{l=1}^N w_{il}(t) \left(p_{ij}(t)-\frac{\gamma}{N}\right) A_{i,j}^{\gamma,t}} { \sum_{l=1}^N w_{il}(t) \left( 1 + \frac{p_{ij}(t)-\frac{\gamma}{N}}{1-\gamma} \left( A_{i,j}^{\gamma,t}-1 \right) \right)} + \frac{\gamma}{N} \nonumber \\
&=& \frac{\left(p_{ij}(t)-\frac{\gamma}{N}\right) A_{i,j}^{\gamma,t}}{1 + \frac{p_{ij}(t)-\frac{\gamma}{N}}{1-\gamma} \left( A_{i,j}^{\gamma,t}-1 \right)} + \frac{\gamma}{N} ~. \nonumber
\end{eqnarray}

The continuous time process is obtained by taking the limit
$\gamma \rightarrow 0$, i.e., the rate of change in $p_{ij}$ with
respect to \rev{$\gamma$} as $\gamma \rightarrow 0$. Then, dropping the
discrete time script $t$,
\begin{eqnarray}
\dot{p_{ij}}&=&\lim_{\gamma \rightarrow 0} \frac{d p_{ij}}{d\gamma} \nonumber \\
&=& \lim_{\gamma \rightarrow 0} \frac{ \left(\frac{-1}{N} A_{i,j}^{\gamma,t} + \left(p_{ij}-\frac{\gamma}{N}\right) \frac{U_{ij}}{p_{ij}N} A_{i,j}^{\gamma,t}\right) \left( 1 + \frac{p_{ij}-\frac{\gamma}{N}}{1-\gamma} \left( A_{i,j}^{\gamma,t}-1 \right) \right)}{ \left( 1 + \frac{p_{ij}-\frac{\gamma}{N}}{1-\gamma} \left( A_{i,j}^{\gamma,t}-1 \right) \right)^2 } \nonumber \\
&+& \frac{ \left(p_{ij}-\frac{\gamma}{N}\right) A_{i,j}^{\gamma,t} \left( \frac{p_{ij}-\frac{1}{N}}{(1-\gamma)^2} A_{i,j}^{\gamma,t} + \frac{p_{ij}-\frac{1}{N}}{1-\gamma} \left( \frac{\gamma}{N} A_{i,j}^{\gamma,t} \right) \right) } { \left( 1 + \frac{p_{ij}-\frac{\gamma}{N}}{1-\gamma} \left( A_{i,j}^{\gamma,t}-1 \right) \right)^2} + \frac{1}{N} \nonumber \\
&=& \frac{U_{ij}(1-p_{ij})}{N} ~. \label{eqn:exp3}
\end{eqnarray}

Consider the case $\sigma_i(t)= k \neq j$. Then,
\begin{eqnarray}
p_{ij}(t+1)&=& \frac{(1-\gamma) w_{ij}(t)} {\sum_{l=1}^N w_{il}(t) + w_{ik}(t) \left( A_{i,k}^{\gamma,t}-1 \right)} + \frac{\gamma}{N} \nonumber \\
&=& \frac{p_{ij}(t)-\frac{\gamma}{N}}{1 + \frac{p_{ik}(t)-\frac{\gamma}{N}}{1-\gamma} \left( A_{i,k}^{\gamma,t}-1 \right)} + \frac{\gamma}{N}. \nonumber
\end{eqnarray}
Thus
\begin{eqnarray}
\dot{p_{ij}}&=& \lim_{\gamma \rightarrow 0} \frac{ \frac{-1}{N} \left(1+ \frac{p_{ik}-\frac{\gamma}{N}}{1-\gamma}  \left( A_{i,k}^{\gamma,t}-1 \right)  \right)}{\left(1+ \frac{p_{ik}-\frac{\gamma}{N}}{1-\gamma}  \left( A_{i,k}^{\gamma,t}-1 \right) \right)^2} \nonumber \\
&+& \frac{ \left(p_{ij}-\frac{\gamma}{N}\right) \left( \frac{p_{ik}-\frac{1}{N}}{(1-\gamma)^2} A_{i,k}^{\gamma,t} + \frac{p_{ik}-\frac{1}{N}}{1-\gamma} \left( \frac{\gamma}{N} A_{i,k}^{\gamma,t} \right) \right)}{\left(1+ \frac{p_{ik}-\frac{\gamma}{N}}{1-\gamma}  \left( A_{i,k}^{\gamma,t}-1 \right) \right)^2} +\frac{1}{N} \nonumber \\
&=& -\frac{p_{ik}U_{ik}}{N} ~. \label{eqn:exp4}
\end{eqnarray}

Then from (\ref{eqn:exp3}) and (\ref{eqn:exp4}), the expected change in $p_{ij}$ with respect to the probability distribution $p_i$ of user $i$ over the channels is
\begin{eqnarray}
\bar{p}_{ij} = E_i[\dot{p}_{ij}]= \frac{1}{N} p_{ij} \sum_{l \in {\cal N}-\left\{j\right\}} p_{il} (U_{ij} - U_{il}). \nonumber
\end{eqnarray}
Taking the expectation with respect to the randomization of channel rates and other \rev{users'} actions we have
\begin{eqnarray}
\xi_{ij} &=& E_j[E_{-i}[\bar{p}_{ij}]] \nonumber \\
&=& \frac{1}{N} p_{ij} \sum_{l \in {\cal N}-\left\{j\right\}} p_{il} \left(E_j[ E_{-i}[ U_{ij}]] - E_j[E_{-i}[ U_{il}]] \right) \nonumber \\
&=&\frac{1}{N} (\mu_j p_{ij}) \sum_{l=1}^N p_{il} (\bar{g}_{ij}-\bar{g}_{il}) ~. \nonumber
\end{eqnarray} \qed
\end{proof}
Lemma \ref{lemma:exp1} shows that the dynamics of a user's
probability distribution over the actions is \rev{given by} a replicator equation
which is commonly studied in evolutionary game theory
\cite{sandholm1,smith1}. With this lemma we can establish the following theorem.

\begin{theorem} \label{thm:exp2}
For all but a measure zero subset of $[0,1]^{2N}$ from which \rev{the $\mu_j$'s and $g_j$'s} are selected, when $\gamma$ in Exp3 is arbitrarily small, the action profile converges to the set of PNE of the
congestion game $({\cal M}, {\cal N}, ({\cal S}_i)_{i\in {\cal M}},
(\mu_j g_j)_{j\in {\cal N}})$.
\end{theorem}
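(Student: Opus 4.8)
The plan is to combine Lemma~\ref{lemma:exp1} with known convergence results for the replicator dynamics on potential (congestion) games. First I would observe that the vector field $\xi = (\xi_{ij})_{i\in\mathcal{M},\, j\in\mathcal{N}}$ from Lemma~\ref{lemma:exp1} is exactly the multi-population replicator dynamics associated with the game in which player $i$'s payoff for playing channel $j$ against the mixed profile $p_{-i}$ is $\mu_j \bar{g}_{ij}(p_{-i})$, where $\bar{g}_{ij}(p_{-i}) = E_{-i}[g_j(1+K'_j(i))]$. This is a \emph{population game / congestion game} in mixed strategies: the expected payoff on channel $j$ depends on the other players only through how many of them land on $j$, with the common decreasing weight $g_j$. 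Such a game is a potential game; a natural candidate potential is the (expected) Rosenthal potential $\Phi(p) = E\big[\sum_{j} \sum_{k=1}^{K_j} \mu_j g_j(k)\big]$ with expectation over the product distribution $\prod_i p_i$, and one checks that $\partial \Phi / \partial p_{ij} = \mu_j \bar{g}_{ij}$ up to the terms that the replicator projection annihilates, so that $\Phi$ is a strict Lyapunov function for $\xi$: $\dot\Phi = \sum_{i,j} (\partial\Phi/\partial p_{ij})\,\xi_{ij} \ge 0$ with equality only at rest points.

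Next I would invoke the standard folklore for replicator dynamics on potential games: trajectories converge to the set of rest points, the rest points are exactly the Nash equilibria of the mixed-extension game together with the faces of the simplex, and interior/mixed rest points other than pure profiles are not Lyapunov stable. More precisely, a rest point that is not a vertex of $\prod_i \Delta(\mathcal{N})$ has an unstable manifold of positive dimension, because the Jacobian of the replicator field at such a point has an eigenvalue with positive real part whenever the point fails the relevant second-order/ESS-type condition — and this failure holds for all $(\mu,g)$ outside a measure-zero algebraic set, since the degeneracy is captured by the vanishing of a nontrivial polynomial in the parameters $(\mu_j, g_j(\cdot))$. Here I would cite or reproduce the argument of Kleinberg et al.~\cite{kleinberg1} for aggregate monotonic selection and weakly stable equilibria: for generic congestion games the only stable rest points are the pure ones, and a pure profile is a rest point of $\xi$ iff it is a PNE of $(\mathcal{M},\mathcal{N},(\mathcal{S}_i),(\mu_j g_j))$. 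Combining, for generic $(\mu,g)$ every trajectory of $\xi$ converges and the $\omega$-limit set is a single PNE, except for a measure-zero set of initial conditions whose forward orbit is trapped on the stable manifold of a non-pure rest point.

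Then I would transfer this back from the continuous-time ODE to the actual discrete-time Exp3 run with small but fixed $\gamma$. The link is the standard stochastic-approximation picture: the Exp3 update is a stochastic approximation (with step size $\gamma$ and a martingale-difference noise term plus an $O(\gamma)$ bias) of the ODE $\dot p = \xi(p)$; by the ODE method (Benaïm / Kushner–Clark) the discrete trajectory shadows the flow on compact time intervals, and since $\Phi$ is a strict Lyapunov function whose set of critical values is, generically, finite, the interpolated process converges a.s.\ as $\gamma\to 0$ to the (generically finite) set of PNE, avoiding the unstable rest points by the standard "non-convergence to unstable equilibria" results for stochastic approximation with persistently exciting noise (the $\gamma/N$ exploration term guarantees the noise is non-degenerate in every direction). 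Hence the action profile $\sigma(t)$, which is sampled from $p(t)$, concentrates on a PNE.

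\textbf{Main obstacle.} The delicate point is the \emph{genericity} step: showing that the set of parameter vectors $(\mu_j, g_j)\in[0,1]^{2N}$ for which the replicator dynamics has a non-pure Lyapunov-stable rest point (equivalently, for which "weakly stable equilibria" differ from PNE, or for which the relevant Jacobian is non-hyperbolic) has Lebesgue measure zero. This requires identifying the exact algebraic/transversality condition — as in \cite{kleinberg1} — and arguing it defines a proper subvariety, which is where all the real work lies; the passage from the ODE to discrete Exp3, while technical, is routine stochastic approximation.
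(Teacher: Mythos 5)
Your proposal follows essentially the same route as the paper: identify the continuous-time limit from Lemma~\ref{lemma:exp1} as the replicator dynamics of the congestion game, use the (expected) potential as a Lyapunov function to get convergence to rest points, argue via the Jacobian that stable rest points are weakly stable Nash equilibria, and then invoke the algebraic-geometry genericity result of Kleinberg et al.~\cite{kleinberg1} that weakly stable equilibria coincide with PNE for all but a measure-zero set of parameters, with the discrete-to-continuous passage handled by taking $\gamma$ infinitesimal. The paper's own proof is exactly this outline (deferring the details, including the genericity step you flag as the main obstacle, to \cite{kleinberg1}), so your attempt is correct and matches it.
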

\begin{proof}
\rev{Because the replicator equation in Lemma \ref{lemma:exp1} is identical to the replicator equation in \cite{kleinberg1}, the proof of converge to PNE follows from \cite{kleinberg1}. Here, we briefly explain the steps in the proof.}
Defining the expected potential function to be
the expected value of the potential function $\phi$ where
expectation is taken with respect to the user's randomization one
can show that the solutions of the replicator equation converges
to the set of fixed points. Then the stability analysis using the
Jacobian matrix yields that every stable fixed point corresponds
to a Nash equilibrium. Then one can prove that for any stable
fixed point the eigenvalues of the Jacobian must be zero. This
implies that every stable fixed point corresponds to a {\em weakly
stable} Nash equilibrium strategy in the game theoretic sense.
Then using tools from algebraic geometry one can show that almost every
weakly stable Nash equilibrium is a pure Nash equilibrium of the
congestion game.

We also need to investigate the error introduced by treating the
discrete time update rule as a continuous time process. However,
by taking $\gamma$ infinitesimal we can approximate the discrete
time process by the continuous time process. For a discussion when
$\gamma$ is not infinitesimal one can define {\em approximately
stable equilibria} \cite{kleinberg1}. \qed
\end{proof}

The main difference between Exp3 and Hedge \cite{kleinberg1} is
that in Exp3 users do not need to observe the payoffs from the
channels that they do not select, whereas Hedge assumes complete
observation. In addition to that, we considered the dynamic channel
rates which is not considered in \cite{kleinberg1}.

\section{An Algorithm for Socially Optimal Allocation with Sub-linear Regret (Case 2)} \label{sec:case2}

In this section we propose an algorithm whose regret with respect to the socially optimal allocation is $O(n^{\frac{2M-1+2\gamma}{2M}})$ for $\gamma>0$ arbitrarily small. Clearly this regret is sublinear and approaches linear as the number of users $M$ increases. This means that the time average of the sum of the utilities of the players converges to the socially optimal welfare. Let ${\cal K} = \{k=(k_1, k_2, \ldots, k_N): k_j \geq 0, \forall j \in {\cal N}, k_1+k_2+\ldots+k_N=M \}$ denote an allocation of $M$ users to $N$ channels. Note that this allocation gives only the number of users on each channel. It does not say anything about which user uses which channel. We assume that the socially optimal allocation is unique up to permutations so $k^* = \arg\max_{k \in {\cal K}} \sum_{j=1}^N \mu_j k_j g_j(k_j)$ is unique. We also assume the following stability condition of the socially optimal allocation. Let $v_j(k_j)= \mu_j g_j(k_j)$. Then the stability condition says that $\arg\max_{k \in {\cal K}} \sum_{j=1}^N k_j \hat{v}_j(k_j)= k^*$ if $|\hat{v}_j(k)-v_j(k)| \leq \epsilon, \forall k \in \{1,2,\ldots,M\}, \forall j \in {\cal N}$, for some $\epsilon>0$, where $\hat{v}_j: \mathbb{N} \rightarrow \mathbb{R}$ is an arbitrary function. Let $T^i_{j,k}(t)$ be the number of times user $i$ used channel $j$ and observed $k$ users on it up to time $t$. We refer to the tuple $(j,k)$ as an arm. Let $n^i_{j,k}(t)$ be the time of the $t$th observation of user $i$ from arm $(j,k)$. Let $u^i_{j,k}(t)$ be the sample mean of the rewards from arm $(j,k)$ seen by user $i$ at the end of the $t$th play of arm $(j,k)$ by user $i$, i.e., $u^i_{j,k}(t)=(h_{j,k}(n^i_{j,k}(1)) + \ldots + h_{j,k}(n^i_{j,k}(t)))/t$. Then the socially optimal allocation estimated by user $i$ at time $t$ is $k^{i*}(t) = \arg\max_{k \in {\cal K}} \sum_{j=1}^N k_j u^i_{j,k}(t)$. The pseudocode of the Randomized Learning Algorithm (RLA) is given in Fig. \ref{fig:rla}. At time $t$ RLA explores with probability $1/(t^{\frac{1}{2M}-\frac{\gamma}{M}})$ by randomly choosing one of the channels and exploits with probability $1 - 1/(t^{\frac{1}{2M}-\frac{\gamma}{M}})$ by choosing a channel which is occupied by a user in the estimated socially optimal allocation.

The following will be useful in the proof of the main theorem of this section.
\begin{lemma} \label{lemma:case22}
Let $X_i, i=1,2,\ldots$ be a sequence of independent Bernoulli random variables such that $X_i$ has mean $q_i$ with $0 \leq q_i \leq 1$. Let $\bar{X}_k= \frac{1}{k} \sum_{i=1}^k X_i$ , $\bar{q}_k = \frac{1}{k} \sum_{i=1}^k q_i$. Then for any constant $\epsilon \geq 0$ and any integer $n \geq 0$,
\begin{eqnarray}
P\left( \bar{X}_n - \bar{q}_n \leq -\epsilon \right) \leq e^{-2n\epsilon^2}. \label{eqn:case26}
\end{eqnarray} 
\end{lemma}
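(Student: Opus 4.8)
The plan is to prove this as a Chernoff--Hoeffding bound, handling the general (non--identically--distributed) Bernoulli case directly. The cases $n=0$ or $\epsilon=0$ are trivial, since the right-hand side of (\ref{eqn:case26}) is then at least $1$; so assume $n\geq 1$ and $\epsilon>0$.

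First I would pass to the centered variables $Y_i := q_i - X_i$, which are independent with $E[Y_i]=0$, and observe that
\begin{eqnarray*}
\left\{ \bar{X}_n - \bar{q}_n \leq -\epsilon \right\} = \left\{ \sum_{i=1}^n Y_i \geq n\epsilon \right\}.
\end{eqnarray*}
For any $s>0$, Markov's inequality applied to the nonnegative random variable $e^{s\sum_i Y_i}$ gives
\begin{eqnarray*}
P\left( \sum_{i=1}^n Y_i \geq n\epsilon \right) \leq e^{-sn\epsilon}\, E\left[ e^{s\sum_{i=1}^n Y_i} \right] = e^{-sn\epsilon} \prod_{i=1}^n E\left[ e^{sY_i} \right],
\end{eqnarray*}
where the last equality uses independence of the $X_i$.

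The key step is to bound each factor $E[e^{sY_i}]$. Since $Y_i$ has mean zero and is supported on an interval of length $1$ (namely $[q_i-1,\,q_i]$), Hoeffding's lemma gives $E[e^{sY_i}] \leq e^{s^2/8}$. This is the only non-routine ingredient: one uses convexity of $y\mapsto e^{sy}$ to reduce to the two-point distribution on the endpoints of the interval, expresses the resulting cumulant generating function as $\psi(s)$, and shows by a second-order Taylor expansion that $\psi''(s) \leq 1/4$ (the variance of a random variable confined to an interval of length $1$ is at most $1/4$), hence $\psi(s)\leq s^2/8$. I expect this to be the main obstacle, though it is a standard fact one may also simply cite.

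Combining the two displays with the per-factor bound yields $P(\bar{X}_n - \bar{q}_n \leq -\epsilon) \leq e^{-sn\epsilon + ns^2/8}$ for every $s>0$. The last step is to optimize: the exponent $-sn\epsilon + ns^2/8$ is minimized at $s=4\epsilon$, which gives the bound $e^{-2n\epsilon^2}$, as claimed. Equivalently, the whole statement follows as the special case of Hoeffding's inequality for independent random variables each taking values in an interval of length one.
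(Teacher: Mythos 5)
Your proof is correct, but it is not how the paper proceeds: the paper's entire proof is a one-line citation, namely that the bound ``follows from symmetry and \cite{turner1}'' --- i.e.\ it invokes a known upper-tail Hoeffding-type inequality for sums of independent, non-identically distributed Bernoulli variables and obtains the stated lower-tail version by the substitution $X_i \mapsto 1-X_i$. You instead give a complete, self-contained derivation: centering to $Y_i = q_i - X_i$, the exponential Markov (Chernoff) bound with independence to factor the moment generating function, Hoeffding's lemma $E[e^{sY_i}] \leq e^{s^2/8}$ for a mean-zero variable supported in an interval of length one, and optimization at $s = 4\epsilon$ to get $e^{-2n\epsilon^2}$; the boundary cases $n=0$, $\epsilon=0$ are correctly dismissed since the right-hand side is then $\geq 1$. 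All steps check out (the stationary point $s=4\epsilon$ indeed gives exponent $-2n\epsilon^2$), so your argument buys self-containedness and makes transparent that unequal means $q_i$ cause no difficulty, at the cost of reproving a standard inequality; the paper's route buys brevity by outsourcing exactly this content to the cited reference. One small simplification available to you: since each $X_i$ is Bernoulli, $Y_i$ is already a two-point variable on the endpoints $\{q_i-1, q_i\}$ of the length-one interval, so the convexity reduction step inside Hoeffding's lemma is unnecessary --- you can verify $q_i e^{s(q_i-1)} + (1-q_i)e^{s q_i} \leq e^{s^2/8}$ directly by the usual Taylor bound on the cumulant generating function.
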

\begin{proof}
The result follows from symmetry and \cite{turner1}. 
\qed
\end{proof}

\begin{lemma} \label{lemma:case23}
For $p>0, p \neq 1$
\begin{eqnarray}
\frac{(n+1)^{1-p}-1}{1-p} <	\sum_{t=1}^n \frac{1}{t^p} < 1 + \frac{n^{1-p}-1}{1-p} \label{eqn:case2_7}
\end{eqnarray}
\end{lemma}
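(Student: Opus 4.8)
The plan is to establish the two-sided bound on the partial sum $\sum_{t=1}^n t^{-p}$ by comparing the sum with the integral $\int x^{-p}\,dx$, exploiting monotonicity of the function $x \mapsto x^{-p}$ on $(0,\infty)$. I would split into the two inequalities and treat them separately, keeping track of the sign of $1-p$ throughout (which is where the two regimes $0<p<1$ and $p>1$ behave differently, though the final formula is uniform).

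For the lower bound, I would use that $t^{-p} \geq \int_{t}^{t+1} x^{-p}\,dx$ for each $t \geq 1$, since $x^{-p}$ is decreasing. Summing over $t=1,\ldots,n$ gives
\begin{eqnarray*}
\sum_{t=1}^n \frac{1}{t^p} \geq \int_1^{n+1} x^{-p}\,dx = \frac{(n+1)^{1-p}-1}{1-p},
\end{eqnarray*}
where the antiderivative evaluation is valid because $p \neq 1$. (Strictness of the inequality comes from the function being strictly decreasing, so each term strictly exceeds the corresponding integral.) For the upper bound, I would similarly bound $t^{-p} \leq \int_{t-1}^{t} x^{-p}\,dx$ for $t \geq 2$, and handle the $t=1$ term separately as the explicit $1$. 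Summing,
\begin{eqnarray*}
\sum_{t=1}^n \frac{1}{t^p} \leq 1 + \int_1^{n} x^{-p}\,dx = 1 + \frac{n^{1-p}-1}{1-p}.
\end{eqnarray*}

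I should double-check that the direction of these inequalities is preserved when $p>1$: in that case $1-p<0$, so both numerator and denominator of $\frac{(n+1)^{1-p}-1}{1-p}$ are negative and the quotient is positive, and since dividing an inequality by the negative quantity $1-p$ flips it, one has to be slightly careful — but this is exactly compensated because $\int_1^{n+1} x^{-p}\,dx$ is genuinely equal to $\frac{(n+1)^{1-p}-1}{1-p}$ regardless of the sign of $1-p$, so the integral comparison does all the work and no manual sign-flipping of the stated formula is needed. The only mild subtlety is that for the inequalities to be strict one needs $n \geq 1$ for the lower bound (so that there is at least one term strictly above its integral slice) and $n \geq 2$ for the upper bound to be strict; for $n=1$ the upper bound reads $1 < 1 + 0$, which already holds, so the claimed strict inequalities hold for all $n \geq 1$.

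There is no real obstacle here — the argument is the standard integral test with strict inequalities — so the main thing to get right is bookkeeping: making sure the endpoints of the integrals ($1$ to $n+1$ for the lower bound, $1$ to $n$ for the upper bound after peeling off the $t=1$ term) line up so that the evaluated antiderivatives reproduce the stated expressions exactly. I would present it compactly as the two integral comparisons above followed by the explicit evaluation $\int_a^b x^{-p}\,dx = \frac{b^{1-p}-a^{1-p}}{1-p}$.
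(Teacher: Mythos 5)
Your integral-test argument is correct and self-contained, which is more than the paper provides: the paper's ``proof'' of this lemma is simply a citation to an external reference (Chlebus), so your write-up actually supplies the missing argument, and it is the standard one --- comparing $t^{-p}$ with $\int_t^{t+1}x^{-p}\,dx$ for the lower bound and with $\int_{t-1}^{t}x^{-p}\,dx$ (after peeling off the $t=1$ term) for the upper bound, then evaluating $\int_a^b x^{-p}\,dx=\frac{b^{1-p}-a^{1-p}}{1-p}$, which is valid for either sign of $1-p$ exactly as you say. One small slip in your closing remark: for $n=1$ the claimed strict upper bound reads $1<1+\frac{1^{1-p}-1}{1-p}=1$, which is \emph{false} (it is an equality), not ``already holds'' as you wrote. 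So as stated the strict upper inequality requires $n\geq 2$; for $n=1$ one only gets $\leq$. This is really a defect of the lemma statement rather than of your argument, and it is harmless for the way the lemma is used in the paper (bounding $s_t=\sum_{s=1}^t s^{-(1/2)+\gamma}$ for large $t$), but you should either restrict to $n\geq 2$ or state the upper bound with $\leq$ to be precise. Everything else --- the strictness of the lower bound for all $n\geq1$ from strict monotonicity, and the observation that no manual sign bookkeeping is needed because the antiderivative formula absorbs the sign of $1-p$ --- is right.
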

\begin{proof}
See \cite{chlebus1}. \qed
\end{proof}

\begin{figure}[htb]
\fbox {
\begin{minipage}{\columnwidth}
RLA (for user $i$)
\begin{algorithmic}[1]
\STATE{Initialize: $0<\gamma << 1$, $u^i_{j,k}(1)=0, T^i_{j,k}(1)=0, \forall j \in {\cal N}, k \in {\cal M}$, $t=1$, sample $\sigma_i(1)$ uniformly from ${\cal N}$.}
\WHILE{$t>0$}
\STATE{play channel $\sigma_i(t)$, observe $l(t)$ the total number of players using channel $\sigma_i(t)$ and reward $h_{\sigma_i(t), l(t)}(t)$.}
\STATE{Set $T^i_{\sigma_i(t),l(t)}(t+1)=T^i_{\sigma_i(t),l(t)}(t)+1$.}
\STATE{Set $T^i_{j,l}(t+1)=T^i_{j,l}(t)$ for $(j,l) \neq (\sigma_i(t),l(t))$.}
\STATE{Set $u^i_{\sigma_i(t),l(t)}(t+1)= \frac{T^i_{\sigma_i(t),l(t)}(t) u^i_{\sigma_i(t),l(t)}(t) + h_{\sigma_i(t), l(t)}(t)}{T^i_{\sigma_i(t),l(t)}(t+1)}$.}
\STATE{Set $u^i_{j,l}(t+1) = u^i_{j,l}(t)$ for $(j,l) \neq (\sigma_i(t),l(t))$.}
\STATE{Set $k^{i*}(t+1) = \arg\max_{k \in {\cal K}} \sum_{j=1}^N k_j u^i_{j,k_j}(t+1)$.}
\STATE{Set $\theta^{*i}(t+1)$ to be the set of channels used by at least one user in $k^{*i}(t+1)$.}
\STATE{Draw $i_t$ randomly from Bernoulli distribution with $P(i_t=1)=\frac{1}{t^{(1/2M) - \gamma/M}}$}
\IF{$i_t=0$} 
\IF{$\sigma_i(t) \in \theta^*(t+1)$ and $l(t)=k^{i*}_j(t+1)$}
\STATE{$\sigma_i(t+1) = \sigma_i(t)$}
\ELSE
\STATE{$\sigma_i(t+1)$ is selected uniformly at random from the channels in $\theta^*(t+1)$.}
\ENDIF
\ELSE
\STATE{Draw $\sigma_i(t+1)$ uniformly at random from ${\cal N}$.}
\ENDIF
\STATE{$t=t+1$}
\ENDWHILE
\end{algorithmic}
\end{minipage}}
\caption{pseudocode of RLA} \label{fig:rla}
\end{figure}

\begin{theorem} \label{thm:case2}
When all players use RLA the regret with respect to the socially optimal allocation is $O(n^{\frac{2M-1+2\gamma}{2M}})$ where $\gamma$ can be arbitrarily small.
\end{theorem}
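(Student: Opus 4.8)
The plan is to reduce the regret to a counting problem. Write $\alpha:=\frac{1}{2M}-\frac{\gamma}{M}=\frac{1-2\gamma}{2M}$, so that $n^{1-\alpha}=n^{\frac{2M-1+2\gamma}{2M}}$. Conditioning on the action profile $\sigma(t)$, the per‑step expected welfare is $\sum_j \mu_j K_j(\sigma(t))g_j(K_j(\sigma(t)))$; this equals $v^*$ exactly when $K(\sigma(t))=k^*$ and is $\le v^*\le M$ always, since $k^*$ maximizes the welfare function over $\mathcal{K}$. Hence $v^*-E[\text{welfare}(t)]\le M\,P(K(\sigma(t))\neq k^*)$ and $R(n)\le M\sum_{t=1}^n P(K(\sigma(t))\neq k^*)$, so it suffices to show $\sum_{t=1}^n P(K(\sigma(t))\neq k^*)=O(n^{1-\alpha})$.

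Call a decision step $s$ a \emph{disruption} if some user explores at step $s$, or some user's estimate $k^{i*}(s+1)$ is wrong (not equal to $k^*$). Two facts drive the argument. First, between disruptions the occupancy profile relaxes to $k^*$ geometrically fast: when all users exploit with the correct estimate, every user on a channel $j\in\theta^*$ with exactly $k^*_j$ occupants stays put, and every other user re‑randomizes uniformly into $\theta^*$; one checks that after the ``misplaced'' users leave, each channel of $\theta^*$ is either complete (has $k^*_j$ users) or empty, and the number of vacated users equals the total number of empty slots, so with probability at least $(1/N)^M=:\rho_0$ the re‑randomizing users land so as to exactly fill the gaps, reaching the absorbing profile $k^*$ in one step. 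Second, the per‑step disruption probability is small: the exploration contribution is at most $Ms^{-\alpha}$, summing to $O(n^{1-\alpha})$ by Lemma \ref{lemma:case23}, and (shown below) the wrong‑estimate contribution sums to $O(1)$. Charging each ``bad'' step $t$ (with $K(\sigma(t))\neq k^*$) to the most recent disruption $s<t$ (or to the initial uniform profile if there is none before $t$), the number of bad steps charged to a given disruption is stochastically dominated by a Geometric$(\rho_0)$ variable, and this domination is independent of whether step $s$ was itself a disruption because the relaxation uses fresh re‑randomization. Therefore $E[\#\{\text{bad steps in }[1,n]\}]\le \rho_0^{-1}\big(\sum_s P(\text{disruption at }s)+O(1)\big)=O(n^{1-\alpha})$; this bookkeeping is precisely what avoids an extra $\log n$ factor.

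The technical heart is $\sum_s P(\exists i:\,k^{i*}(s+1)\neq k^*)=O(1)$. By the stability condition, $k^{i*}(s)\neq k^*$ forces $|u^i_{j,k}-v_j(k)|>\epsilon$ for some arm $(j,k)$ with $k\le M$, so it suffices to show that, for each user $i$ and arm $(j,k)$, $T^i_{j,k}(s)\ge m(s):=\Theta(s^{1-\alpha M})$ except with super‑polynomially small probability, and then invoke a Hoeffding bound with a union bound over the sample count (the rewards on a fixed arm are i.i.d.\ in $[0,1]$ with mean $v_j(k)$, since the channel rates are exogenous and the decision to play an arm at a given time is determined by the past). For the sampling lower bound I use only the steps $r$ at which \emph{all} $M$ users explore: these occur with probability $(r^{-\alpha})^M=r^{-\alpha M}$, are mutually independent across $r$ (each depends only on step‑$r$ randomness), and conditioned on one the profile at time $r+1$ is uniform on $\mathcal{N}^M$, so user $i$ observes arm $(j,k)$ with a constant conditional probability $c_0(N,M,j,k)\ge (1/N)^M>0$. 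Thus $T^i_{j,k}(s)$ dominates a sum of independent Bernoullis with mean total $\ge c_0\sum_{r<s}r^{-\alpha M}=\Theta(s^{1-\alpha M})$ (Lemma \ref{lemma:case23}), and Lemma \ref{lemma:case22} gives $P\big(T^i_{j,k}(s)<\tfrac12\Theta(s^{1-\alpha M})\big)\le e^{-\Theta(s^{1-2\alpha M})}=e^{-\Theta(s^{2\gamma})}$. Summing over $i$, over the finitely many arms, and over $s$ (together with the Hoeffding term $\sum_s e^{-\Theta(m(s))}$) gives $O(1)$.

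I expect the main obstacle to be exactly this last step, and in particular the exponent arithmetic: one needs $\alpha M<\tfrac12$ so that the count $\Theta(s^{1-\alpha M})$ of ``all‑explore'' samples grows fast enough for the concentration exponent $s^{1-2\alpha M}=s^{2\gamma}$ to be a genuine positive power of $s$ — this is why the theorem insists $\gamma>0$, and why the exploration rate $t^{-\alpha}$ with $\alpha=\frac{1-2\gamma}{2M}$ is chosen as it is (it simultaneously keeps the exploration regret at $O(n^{1-\alpha})$ and makes simultaneous exploration by all users frequent enough to sample every occupancy level of every channel). Some additional care is needed to make the i.i.d.\ / Hoeffding step fully rigorous given that an arm's observation times are themselves random, and to dispose of a bounded number of initial steps $t<t_0$ by the trivial bound $P(K(\sigma(t))\neq k^*)\le 1$.
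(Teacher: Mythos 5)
Your proposal is correct and follows essentially the same route as the paper's proof: show the expected number of steps with an incorrect estimate $k^{i*}$ is finite (via the stability condition, a Chernoff--Hoeffding bound on the sample means, and concentration of the count of ``all users explore'' steps, whose per-step probability $t^{-(1/2-\gamma)}$ yields the $e^{-\Theta(t^{2\gamma})}$ tail), bound the exploration steps by $O(n^{\frac{2M-1+2\gamma}{2M}})$, and charge each such ``bad'' step an expected geometric settling time under the exploit-and-re-randomize dynamics. Your minor deviations---a union bound over the random sample count instead of the paper's $\sqrt{a\ln t/T^i_{j,l}(t)}$ device, the binomial landing probability in place of the paper's stars-and-bars $p_l$, and $(1/N)^M$ in place of $1/\binom{M+z^*-1}{z^*-1}$ for the settling probability---do not change the structure of the argument and are, if anything, spelled out a bit more carefully than in the paper.
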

\begin{proof}  
Let $H(t)$ be the event that at time $t$ there exists at least one user that computed the socially optimal allocation \rev{incorrectly}.  Let $\omega$ be a sample path. Then
\begin{eqnarray}
&& \sum_{t=1}^n I(\omega \in H(t)) \leq \sum_{t=1}^n \sum_{i=1}^M I(k^{*i}(t) \neq k^*) \nonumber \\
&\leq& \sum_{(t,i,j,l)=(1,1,1,1)}^{(n, M, N, M)} I(|u^i_{j,l}(T^i_{j,l}(t))-v_j(l)| \geq \epsilon) \nonumber
\end{eqnarray}
\begin{eqnarray}
&=& \sum_{(t,i,j,l)=(1,1,1,1)}^{(n, M, N, M)} I\left(|u^i_{j,l}(T^i_{j,l}(t))-v_j(l)| \geq \epsilon, T^i_{j,l}(t) \geq \frac{a \ln t}{\epsilon^2}\right) \nonumber \\
&+& \sum_{(t,i,j,l)=(1,1,1,1)}^{(n, M, N, M)} I\left(|u^i_{j,l}(T^i_{j,l}(t))-v_j(l)| \geq \epsilon, T^i_{j,l}(t) < \frac{a \ln t}{\epsilon^2}\right) \label{eqn:case21}
\end{eqnarray}
\com{can we fixed the equation numbering (8) in the above equation?} 

Let $\epsilon^i_{j,k}(t) = \sqrt{\frac{a \ln t}{T^i_{j,k}(t)}}$. Then $T^i_{j,k}(t) \geq \frac{a \ln t}{\epsilon^2} \Rightarrow \epsilon \geq \sqrt{\frac{a \ln t}{T^i_{j,k}(t)}} = \epsilon^i_{j,k}(t)$. Therefore, 
\begin{eqnarray}
I\left(|u^i_{j,l}(T^i_{j,l}(t))-v_j(l)| \geq \epsilon, T^i_{j,l}(t) \geq \frac{a \ln t}{\epsilon^2}\right) &\leq& I\left(|u^i_{j,l}(T^i_{j,l}(t))-v_j(l)| \geq \epsilon^i_{j,l}(t) \right) \nonumber \\
I\left(|u^i_{j,l}(T^i_{j,l}(t))-v_j(l)| \geq \epsilon, T^i_{j,l}(t) < \frac{a \ln t}{\epsilon^2}\right) &\leq& I\left( T^i_{j,l}(t) < \frac{a \ln t}{\epsilon^2}\right) \nonumber
\end{eqnarray}
Then, continuing from (\ref{eqn:case21}),
\begin{eqnarray}
&& \sum_{t=1}^n I(\omega \in H(t)) \nonumber \\
&\leq&  \sum_{(t,i,j,l)=(1,1,1,1)}^{(n, M, N, M)} \left( I\left(|u^i_{j,l}(T^i_{j,l}(t))-v_j(l)| \geq \epsilon^i_{j,l}(t) \right) + I\left( T^i_{j,l}(t) < \frac{a \ln t}{\epsilon^2}\right) \right) \label{eqn:case22}
\end{eqnarray}
Taking the expectation \rev{over} (\ref{eqn:case22}),
\begin{eqnarray}
&& E\left[ \sum_{t=1}^n I(\omega \in H(t)) \right] \nonumber \\
&\leq& \sum_{(t,i,j,l)=(1,1,1,1)}^{(n, M, N, M)}  P\left(|u^i_{j,l}(T^i_{j,l}(t))-v_j(l)| \geq \epsilon^i_{j,l}(t) \right) \nonumber \\
&+& \sum_{(t,i,j,l)=(1,1,1,1)}^{(n, M, N, M)} P\left( T^i_{j,l}(t) < \frac{a \ln t}{\epsilon^2} \right). \label{eqn:case23}
\end{eqnarray}
We have
\begin{eqnarray}
&&P\left(|u^i_{j,l}(T^i_{j,l}(t))-v_j(l)| \geq \epsilon^i_{j,l}(t) \right) \nonumber \\
&=& P\left(u^i_{j,l}(T^i_{j,l}(t))-v_j(l) \geq \epsilon^i_{j,l}(t) \right) + P\left(u^i_{j,l}(T^i_{j,l}(t))-v_j(l) \leq -\epsilon^i_{j,l}(t) \right) \nonumber \\
&=&P\left( \frac{S^i_{j,l}(T^i_{j,l}(t))}{T^i_{j,l}(t)} - v_j(l) \geq \epsilon^i_{j,l}(t) \right) + P\left( \frac{S^i_{j,l}(T^i_{j,l}(t))}{T^i_{j,l}(t)} - v_j(l) \leq - \epsilon^i_{j,l}(t) \right) \nonumber \\
%&=& P\left( S^i_{j,l}(T^i_{j,l}(t)) \geq T^i_{j,l}(t) v_j(l) + T^i_{j,l}(t) \epsilon^i_{j,l}(t) \right) + P\left( S^i_{j,l}(T^i_{j,l}(t)) \leq T^i_{j,l}(t) v_j(l) - T^i_{j,l}(t) \epsilon^i_{j,l}(t) \right)
%\nonumber \\
&\leq& 2 \exp\left(- \frac{2 (T^i_{j,l}(t))^2 (\epsilon^i_{j,l}(t))^2}{T^i_{j,l}(t)}\right) = 2 \exp\left(- \frac{2 T^i_{j,l}(t) a \ln t}{T^i_{j,l}(t)}\right) = \frac{2}{t^{2a}} ~, \label{eqn:case24}
\end{eqnarray}
\rev{where} (\ref{eqn:case24}) follows from the Chernoff-Hoeffding inequality. 

Now we will bound $P\left( T^i_{j,l}(t) < \frac{a \ln t}{\epsilon^2} \right)$. Let $TR^i_{j,l}(t)$ be the number of time steps in which player $i$ played channel $j$ and observed $l$ users on channel $j$ in the time steps where all players randomized up to time $t$. Then
\begin{eqnarray}
\{ \omega: T^i_{j,l}(t) < \frac{a \ln t}{\epsilon^2} \} \subset \{ \omega: TR^i_{j,l}(t) < \frac{a \ln t}{\epsilon^2} \}, \nonumber \\
\end{eqnarray}
Thus
\begin{eqnarray}
P\left( T^i_{j,l}(t) < \frac{a \ln t}{\epsilon^2} \right) \leq P\left( TR^i_{j,l}(t) < \frac{a \ln t}{\epsilon^2} \right) ~. \label{eqn:case2_8}
\end{eqnarray}
Now we define new Bernoulli random variables $X^i_{j,l}(s)$ as follows: $X^i_{j,l}(s)=1$ if all players randomize at time $s$ and player $i$ selects channel $j$ and observes $l$ players on it according to the random draw. $X^i_{j,l}(s)=0$ else. Then $TR^i_{j,l}(t)= \sum_{s=1}^t X^i_{j,l}(s)$. $P(X^i_{j,l}(s)=1)= \rho_s p_l$ where $p_l= \frac{{M-1 \choose l-1}{M+N-l-2 \choose N-2}}{ {M+N-1 \choose N-1}}$ and $\rho_s = \frac{1}{s^{(1/2)-\gamma}}$. Let $s_t = \sum_{s=1}^t \frac{1}{s^{(1/2)-\gamma}}$  Then
\begin{eqnarray}
&& P\left( TR^i_{j,l}(t) < \frac{a \ln t}{\epsilon^2} \right) \nonumber \\
&=& P\left( \frac{TR^i_{j,l}(t)}{t} - \frac{p_k s_t}{t} < \frac{a \ln t}{t \epsilon^2} - \frac{p_k s_t}{t} \right) \nonumber \\
&\leq& P\left( \frac{TR^i_{j,l}(t)}{t} - \frac{p_k s_t}{t} < \frac{a \ln t}{t \epsilon^2} - \frac{p_k (t+1)^{(1/2)+\gamma}-1}{t((1/2)+\gamma)} \right), \label{eqn:case29}
\end{eqnarray}
where (\ref{eqn:case29}) follows from Lemma \ref{lemma:case23}.
Let $\tau(M,N,\epsilon,\gamma,\gamma',a)$ be the time that for all $k \in \{1,2,\ldots,M \}$.
\begin{eqnarray}
\frac{p_k (t+1)^{(1/2)+\gamma}-1}{t((1/2)+\gamma)} - \frac{a \ln t}{t \epsilon^2} \geq t^{(1/2)+\gamma'}, \label{eqn:case210}
\end{eqnarray}
where $0 < \gamma' < \gamma$. Then for all $t \geq \tau(M,N,\epsilon,\gamma,\gamma',a)$ (\ref{eqn:case210}) will hold since RHS increases faster than LHS. Thus we have for $t \geq \tau(M,N,\epsilon,\gamma,\gamma',a)$
\begin{eqnarray}
&& P\left( \frac{TR^i_{j,l}(t)}{t} - \frac{p_k s_t}{t} < \frac{a \ln t}{t \epsilon^2} - \frac{p_k (t+1)^{(1/2)+\gamma}-1}{t((1/2)+\gamma)} \right) \nonumber \\
&\leq& P\left( \frac{TR^i_{j,l}(t)}{t} - \frac{p_k s_t}{t} < {t^{-(1/2)+\gamma'} } \right) \nonumber \\
&\leq& e^{-2 t t^{2\gamma'-1}} = e^{-2 t^{2\gamma'}} \leq e^{-2 \ln t} = \frac{1}{t^2}. \label{eqn:case211}
\end{eqnarray}
Let $a=1$. Then continuing from (\ref{eqn:case23}) by substituting (\ref{eqn:case24}) and (\ref{eqn:case211}) \rev{we have} 
\begin{eqnarray}
E\left[ \sum_{t=1}^n I(\omega \in H(t)) \right] \leq M^2 N \left(\tau(M,N,\epsilon,\gamma,\gamma',1) + 3 \sum_{t=1}^n \frac{1}{t^2} \right). \label{eqn:case212}
\end{eqnarray} 
Thus we proved that the expected number of time steps in which there exists at least one user that computed the socially optimal allocation incorrectly is finite. Note that because RLA explores with probability $\frac{1}{t^{1/2M-\gamma/M}}$, the expected number of time steps in which all the players are not randomizing up to time $n$ is 
\begin{eqnarray}
\sum_{t=1}^n \left( 1- \left(1-\frac{1}{t^{(1/2M)-\gamma/M}}\right)^M\right) \leq \sum_{t=1}^n \frac{M}{t^{1/2M - \gamma/M}} = O(n^{\frac{2M-1+2\gamma}{2M}}) . \label{eqn:case213}
\end{eqnarray}
Note that players can choose $\gamma$ arbitrarily small, increasing the finite regret due to $\tau(M,N,\epsilon,\gamma,\gamma',1)$. Thus if we are interested in the asymptotic performance then $\gamma>0$ can be arbitrarily small.

Now we do the worst case analysis. We classify the time steps into two. {\em Good} time steps in which all the players know the socially optimal allocation correctly and none of the players randomize excluding the randomizations done for settling down to the socially optimal allocation. {\em Bad} time steps in which there exists a player that does not know the socially optimal allocation correctly or there is a player that randomizes excluding the randomizations done for settling down to the socially optimal allocation. The number of {\em Bad} time steps in which there exists a player that does not know the socially optimal allocation correctly is finite while the number of time steps in which there is a player that randomizes excluding the randomizations done for settling down to the socially optimal allocation is $O(n^{\frac{2M-1+2\gamma}{2M}})$. The worst case is when each bad step is followed by a good step. Then from this good step the expected number of times to settle down to the socially optimal allocation is $\left(1-\frac{1}{{M+z^*-1 \choose z^*-1}}\right)/\left(\frac{1}{{M+z^*-1 \choose z^*-1}}\right)$ where $z^*$ is the number of channels which has at least one user in the socially optimal allocation. Assuming in the worst case the sum of the utilities of the players is $0$ when they are not playing the socially optimal allocation we have 
\begin{eqnarray}
R(n) &\leq& \frac{1-\frac{1}{{M+z^*-1 \choose z^*-1}}}{\frac{1}{{M+z^*-1 \choose z^*-1}}} \left( M^2 N \left(\tau(M,N,\epsilon,\gamma,\gamma',1) + 3 \sum_{t=1}^n \frac{1}{t^2} \right) + O(n^{\frac{2M-1+2\gamma}{2M}}) \right) \nonumber \\
&=& O(n^{\frac{2M-1+2\gamma}{2M}}) \nonumber
\end{eqnarray}
\qed
\end{proof}

Note that we mentioned earlier, under a classical multi-armed bandit problem approach as cited before \cite{anandkumar,anantharam2,anantharam1,lai1,liu2,tekin,tekin2},  a logarithmic regret $O(\log n)$ is achievable.  The  fundamental difference between these studies and the problem in the present paper is the following: Assume that at time $t$ user $i$ selects channel $j$. This means that $i$ selects to observe an arm from the set $\{(j,k): k \in {\cal M}\}$ but the arm assigned to $i$ is selected from this set depending on the choices of other players. 

Also note that in RLA a user computes the socially optimal allocation according to its estimates at each time step. This could pose significant computational effort since integer programming is NP-hard in general. However, by exploiting the stability condition on the socially optimal allocation a user may reduce the number of computations; this is a subject of future research.

\section{An Algorithm for Socially Optimal Allocation (Case 3)} \label{sec:case3}

In this section we assume that $g_j(n)$ is decreasing in $n$ for
all $j \in {\cal N}$. For simplicity we assume that the socially
optimal allocation is unique up to the permutations of $\sigma^*$.
When this uniqueness assumption does not hold we need a more
complicated algorithm to achieve the socially optimal allocation.
All users use the Random Selection (RS) algorithm defined in
Fig. \ref{fig:alg1}. RS consists of two phases. Phase 1 is the
learning phase where the user randomizes to learn the interference
functions. Let $B_j(t)$ be the set of distinct payoffs observed
from channel $j$ up to time $t$. Then the payoffs in set $B_j(t)$
can be ordered in a decreasing way with the associated indices
$\{1,2,\ldots,|B_j(t)|\}$. Let $O(B_j(t))$ denote this ordering.
Since the IFs are decreasing, at the time $|B_j(t)|=M$, the user
has learned $g_j$. At the time $|\cup_{j=1}^N B_j(t)|=MN$, the
user has learned all IFs. Then, the user computes $\mathcal{A}^*$
and phase 2 of RS starts where the user randomizes to converge to
the socially optimal allocation.

\begin{figure}[htb]
\fbox {
\begin{minipage}{\columnwidth}
{Random Selection (RS)}
\begin{algorithmic}[1]
\STATE{Initialize: $t=1$, $b=0$, $B_j(1)=\emptyset, \forall j \in
{\cal N}$, sample $\sigma_i(1)$ from the uniform distribution on
$\mathcal{N}$} \STATE{Phase 1} \WHILE{$b<MN$}
\IF{$h_{\sigma_i(t)}(t) \notin B_{\sigma_i(t)}(t)$}
\STATE{$B_{\sigma_i(t+1)}(t+1) \leftarrow O(B_{\sigma_i(t)}(t)
\cup h_{\sigma_i(t)}(t))$} \STATE{$b=b+1$} \ENDIF \STATE{Sample
$\sigma_i(t+1)$ from the uniform distribution on $\mathcal{N}$}
\STATE{$t=t+1$} \ENDWHILE  \STATE{find the socially
optimal allocation $\sigma^*$} \STATE{Phase 2} \WHILE{$b\geq MN$}
\IF{$h_{\sigma_i(t)}(t) < v^*_{\sigma_i(t)}$}
\STATE{Sample $\sigma_i(t+1)$ from the uniform distribution on
$\mathcal{N}$} \ELSE \STATE{$\sigma_i(t+1) = \sigma_i(t)$} \ENDIF
\STATE{$t=t+1$} \ENDWHILE
\end{algorithmic}
\end{minipage}
} \caption{pseudocode of RS} \label{fig:alg1}
\end{figure}

\begin{theorem} \label{thm:theorem1}
Under the assumptions of (C3) if all players use RS algorithm to
choose their actions, then the expected time to converge to the
socially optimal allocation is finite.
\end{theorem}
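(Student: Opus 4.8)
The plan is to treat the two phases of RS separately and then combine. Throughout I use that in (C3) the rates are constant, so that on channel $j$ with $n$ users \emph{every} observed reward equals the fixed number $\mu_j g_j(n)$; since $g_j$ is strictly decreasing these $M$ numbers are pairwise distinct and the $k$-th largest payoff ever seen on $j$ equals $\mu_j g_j(k)$. Hence when $|B_j(t)|=M$ the user knows $\mu_j g_j(k)$ for all $k\in\{1,\dots,M\}$, and when $b=MN$ it can compute $v^*$, the occupancy vector $(K_j(\sigma^*))_{j\in{\cal N}}$ and the values $v^*_j$; by uniqueness up to permutation, every user that has finished Phase~1 agrees on these quantities.

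\textbf{Phase 1 finishes in finite expected time.} Fix a user $i$ that is still in Phase~1 at time $t$, so $b_i(t)<MN$ and some pair $(j,k)$ has not yet been observed by $i$. Since $i$ draws $\sigma_i$ uniformly at every Phase-1 step, I would establish the following claim: there are an integer $L=L(M,N)$ and a constant $\delta=\delta(M,N)>0$, independent of the past, such that with probability at least $\delta$ some configuration within the next $L$ steps places $i$ on channel $j$ with exactly $k$ users in total, so that $b_i$ strictly increases. Granting the claim, the time $\tau_1^{(i)}$ for $i$ to collect all $MN$ distinct payoffs is stochastically dominated by $L$ times a sum of $MN$ independent $\mathrm{Geometric}(\delta)$ variables, whence $E[\tau_1^{(i)}]\le LMN/\delta<\infty$; and since the time $\tau_1$ at which the \emph{last} user finishes Phase~1 satisfies $\tau_1=\max_i\tau_1^{(i)}\le\sum_i\tau_1^{(i)}$, we get $E[\tau_1]<\infty$.

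Proving the claim is where the work lies. Users still in Phase~1 are unconstrained --- each hits any prescribed channel with probability $1/N$ per step --- so the only obstruction is a user already in Phase~2 that is ``satisfied'', i.e.\ sits on a channel $j'$ with current occupancy at most $K_{j'}(\sigma^*)$ and will not move on its own. The key observation is that such a user re-randomizes as soon as the occupancy of $j'$ is pushed above $K_{j'}(\sigma^*)$, which the free Phase-1 users can force by piling onto $j'$; iterating this over a bounded number of steps dislodges every Phase-2 user, and on the next step all users (now all randomizing) land, with probability at least $N^{-M}$, on the configuration that puts $k$ of them --- including $i$ --- on channel $j$. Since every step of this block has probability at least $N^{-M}$ we may take $\delta\ge N^{-LM}$. \emph{This maneuvering is the main obstacle}: one has to argue that dislodging Phase-2 users on one channel does not perpetually recreate satisfied users on others, so that a uniform $L$, depending only on $M$ and $N$, suffices.

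\textbf{Phase 2 converges in finite expected time once every user is in it.} For $t\ge\tau_1$ each user uses the correct $v^*_j$, so a user on channel $j$ with occupancy $n_j$ stays iff $\mu_j g_j(n_j)\ge\mu_j g_j(K_j(\sigma^*))$, i.e.\ (by strict monotonicity of $g_j$) iff $n_j\le K_j(\sigma^*)$. Thus the configuration $K(\sigma^*)$ is absorbing. Conversely, from any configuration $n\ne K(\sigma^*)$ there is an over-full channel; every user on an over-full channel ($n_j>K_j(\sigma^*)$) re-randomizes and all others stay, and a counting argument shows that the total deficit $\sum_{j:\,n_j\le K_j(\sigma^*)}\!\big(K_j(\sigma^*)-n_j\big)$ is at most the number $R\le M$ of re-randomizing users, so with probability at least $N^{-R}\ge N^{-M}$ the re-randomizers fill the deficits and refill the over-full channels exactly, producing $K(\sigma^*)$ at the next step. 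Hence from $\tau_1$ on, the expected number of further steps to reach $K(\sigma^*)$ is at most $N^{M}$, and the expected time to converge to the socially optimal allocation is at most $E[\tau_1]+N^{M}<\infty$.
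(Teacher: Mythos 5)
Your decomposition is the same as the paper's: first bound the expected time until every user has finished learning all the values $\mu_j g_j(k)$ (Phase 1), then bound the expected time to settle into $\sigma^*$ once everyone knows it (Phase 2). Your Phase-2 argument is correct and in fact sharper than the paper's: the paper only observes that after the last user learns, the configuration chain becomes absorbing with the optimal allocations absorbing and all other configurations transient, hence finite mean absorption time; your counting argument (the number of re-randomizers equals the number of vacant slots in $K(\sigma^*)$, so the optimal configuration is hit in one step with probability at least $N^{-M}$) gives the explicit bound $N^M$.

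The genuine gap is exactly the Phase-1 claim you flag, and it is worse than a missing routine verification: as stated it is false for some instances. Your maneuver needs the currently randomizing users to be able to overfill any channel holding settled Phase-2 users, but if $m_{j'}$ users are settled on $j'$ and the pool of randomizers has fewer than $K_{j'}(\sigma^*)+1-m_{j'}$ members, piling on dislodges no one. In the extreme case where the optimal allocation puts all $M$ users on a single channel $j_0$ (so $K_{j_0}(\sigma^*)=M$), channel $j_0$ can never be overfull; with positive probability $M-1$ users finish Phase 1 first and settle permanently on $j_0$, after which a straggler still missing, say, the pair $(j_0,2)$ can never observe it. Then no uniform $(L,\delta)$ exists, $E[\tau_1]=\infty$, and indeed the straggler randomizes forever, so the system never settles at $\sigma^*$ at all --- the claim cannot be patched without an extra assumption excluding such allocations (or a weaker notion of ``achieving'' $\sigma^*$). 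For comparison, the paper's own proof passes over the same difficulty by asserting that the configuration chain remains irreducible until the last user finishes learning and bounding $T_L$ by cover times; that assertion fails in the same corner case, so your write-up at least isolates correctly where the real work is needed, but as submitted the key step is unproven and not provable in the stated generality.
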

\begin{proof}
Let $T_{OPT}$ denote the time the socially optimal allocation is
achieved, $T_L$ be the time when all users learn all the IFs, $T_F$ be the time it takes to reach the socially
optimal allocation after all users learn all the IFs. Then $T_{OPT}=T_L+T_F$ and $E[T_{OPT}]=E[T_L]+E[T_F]$.
We will bound $E[T_L]$ and $E[T_F]$. Let $T_{i}$ be the first time
that $i$ users have learned the IFs. Let
$\tau_i=T_i-T_{i-1}, i=1,2,\ldots,M$ and $T_0=0$. Then
$T_L=\tau_1+\ldots+\tau_M$. Define a Markov chain over all $N^M$
possible configurations of $M$ users over $N$ channels based on
the randomization of the algorithm. This Markov chain has a time
dependent stochastic matrix which changes at times $T_1, T_2,
\ldots, T_{M}$. Let $P_{T_0}, P_{T_1}, \ldots, P_{T_M}$ denote the
stochastic matrices after the times $T_0, T_1, \ldots, T_{M}$
respectively. This Markov chain is irreducible at all times up to
$T_M$ and is reducible with absorbing states corresponding to the
socially optimal allocations after $T_M$. Let $\hat{T}_1,
\hat{T}_2, \ldots \hat{T}_{M}$ be the times that all
configurations are visited when the Markov chain has stochastic
matrices $P_{T_0}, P_{T_1}, \ldots, P_{T_{M-1}}$ respectively.
Then because of irreducibility and finite
states $E[\hat{T}_i]<z_1, i=1,\ldots,M$ for some constant
$z_1>0$ . Since $\tau_i \leq \hat{T}_i, i=1,\ldots,M$ a.s. we have
$E[T_L]<Mz_1$. For the Markov chain with stochastic matrix
$P_{T_M}$ all the configurations that do not correspond to the
socially optimal allocation are transient states. Since starting
from any transient state the mean time to absorption is finite
$E[T_F]<z_2$, for some constant $z_2>0$. \qed
\end{proof}

\section{Conclusion} \label{sec:conc}

In this paper we studied the decentralized multiuser resource allocation problem with various levels of communication and cooperation between the users. Under three different scenarios we proposed three algorithms with reasonable performance. Our future reserach will include characterization of achievable performance regions for these scenarios. For example, in case 2 we are interested in finding an optimal algorithm and a lower bound on the performance. 

\bibliographystyle{splncs03}
\bibliography{OSA}

\end{document}